\documentclass[12pt]{article}

\usepackage[T1]{fontenc}
\usepackage[utf8]{inputenc}
\usepackage{amsmath,amssymb,amsthm,mathtools}
\usepackage[hidelinks]{hyperref}

\newtheorem{proposition}{Proposition}

\newtheorem{lemma}{Lemma}
\newtheorem{theorem}{Theorem}

\newcommand{\Hplus}{\mathbb{C}_+}
\newcommand{\eps}{\varepsilon}
\DeclareMathOperator{\adj}{adj}

\newcommand{\R}{\mathbb{R}}
\newcommand{\C}{\mathbb{C}}
\newcommand{\Sn}{\mathbb{S}^{n}}
\newcommand{\TA}{\mathcal{T}_{A}}
\newcommand{\TAs}{\mathcal{T}_{A}^{*}}
\newcommand{\TAh}{\hat{\mathcal{T}}_{A}}
\newcommand{\TAhs}{\hat{\mathcal{T}}_{A}^{*}}

\newcommand{\TAt}{\mathcal{T}_{\tilde{A}}}

\newcommand{\beq}{\begin{equation}}
\newcommand{\eeq}{\end{equation}}

\newcommand{\entrygeq}{\geq}

\title{A Duality Reformulation of the Companion-Matrix Lyapunov Problem}
\author{Augusto Ferrante
\thanks{A. Ferrante is with the Dipartimento di Ingegneria dell'Informazione,
 Universit\`a di Padova,
    via Gradenigo 6B, 35131 Padova, Italy, e-mail: {\texttt augusto@dei.unipd.it}}
   }

\date{}

\begin{document}

\maketitle

\begin{abstract}
We study the relation between positive semidefiniteness and entrywise
nonnegativity for solutions of continuous-time Lyapunov equations. For a
real Hurwitz matrix $A$, we show that the solution of
$AP+PA^\top=-Q$ is positive semidefinite for every symmetric entrywise
nonnegative $Q$ if and only if the solution of $A^\top X+XA=-R$ is
entrywise nonnegative for every positive semidefinite $R$. This
equivalence follows from the adjointness of the two solution operators
and extends to all real unmixed matrices. We then prove that both
properties hold for every real Hurwitz companion matrix, settling a
conjecture previously established under the additional assumption of a
real spectrum. The proof combines a controllability-Gramian normalization
with a pairwise positivity theorem for the coefficients of the adjugate
polynomial of a real accretive matrix. The latter is obtained from a
coefficient-sign property of bivariate polynomials, an auxiliary
determinant that does not vanish on the product of two open right
half-planes, and a rank-one perturbation argument. Covariance and energy
interpretations connect these results with dissipative realizations,
damped second-order systems, and comparisons between input Gramians.
\end{abstract}

\section{Introduction}

Lyapunov equations, together with their discrete-time counterpart
known as Stein equations, play a central role in almost all the areas of systems and control theory including optimal control \cite{AndersonMoore2007}, optimal filtering \cite{AndersonMoore1979}, spectral factorization and stochastic realization \cite{LPBook,Ferrante-94-ieee}, passivity theory \cite{Anderson-Vongpanitlerd}, and stabilization \cite{BoydEtAl1994} to mention but a few. 
Solutions of Lyapunov equations describe quadratic
energy functions, controllability and observability Gramians, and
stationary covariances of linear stochastic systems. A basic result is
that, when $A\in\R^{n\times n}$ is Hurwitz, the unique solution of
\begin{equation}\label{eq:intro-adjoint}
  A^\top X+XA=-R
\end{equation}
is positive semidefinite whenever $R$ is positive semidefinite. This
property follows directly from the integral representation
\[
  X=\int_0^\infty e^{A^\top t}R e^{At}\,dt.
\]
Entrywise nonnegativity of $X$ is a separate question. It imposes a sign
condition on each matrix entry and is not implied by positive
semidefiniteness. Its validity depends on the structure of $A$ and on the
coordinates in which the equation is expressed.

The companion-matrix Lyapunov problem asks whether the solution of
\eqref{eq:intro-adjoint} is entrywise nonnegative for every
$R\succeq0$ when $A$ is a real Hurwitz companion matrix. We use the
companion convention with ones on the superdiagonal and the negatives
of the characteristic-polynomial coefficients in the last row, as in
\eqref{comp-matrix}. In \cite{Ferrante2026}, this property was proved
under the additional assumption that all eigenvalues of $A$ are real.
The case allowing nonreal complex-conjugate eigenvalue pairs was left
open. The main result of the present paper establishes the property
for every real Hurwitz companion matrix. No assumption of a real or
simple spectrum, or of diagonalizability, is required.

Our approach begins with a duality relation between
\eqref{eq:intro-adjoint} and the Lyapunov equation
\begin{equation}\label{eq:intro-primal}
  AP+PA^\top=-Q.
\end{equation}
For a fixed Hurwitz matrix $A$, the corresponding solution operators
$\mathcal T_A$ and $\mathcal T_A^*$ are adjoint with respect to the
Frobenius inner product on real symmetric matrices. The cones of
positive semidefinite matrices and symmetric entrywise nonnegative
matrices are each self-dual for this inner product. This 
provides a general relation between the two notions of positivity.
The duality remains valid whenever the Lyapunov operator is invertible,
or equivalently when $A$ is unmixed namely, no two eigenvalues of $A$, including
a repeated choice of the same eigenvalue, sum to zero.

However, the question remains of determining the class of state matrices $A$, if any, for which 
(\ref{eq:intro-adjoint}) maps positive semidefinite matrices $R$ into entrywise nonegative solutions
$X$ or, equivalently, 
for which 
(\ref{eq:intro-primal}) maps entrywise nonnegative matrices  $Q$ into  positive semidefinite  solutions
$P$. This appears to be a very hard question. In this paper we first provide 
a subclass of these matrices: the set of Hurwitz matrices in companion form.
Building on this result we show, on the one hand,  that if $A$ is not Hurwitz the properties cannot hold and, on the other hand,  that the two properties hold
for a class of state matrices which is much larger than that of Hurwitz matrices in companion form: the matrices obtained from a Hurwitz matrix in companion form by any similarity transformation induced by any entrywise nonnegative nonsingular matrix.  
We also exhibit a matrix outside of this larger class for which the properties hold. 
Further research thus is needed to characterize precisely the class of all matrices for which the above properties hold.

The paper is organized as follows. Section \ref{sec:aux-res} develops the polynomial and adjugate-coefficient auxiliary results which are interesting {\em per se}. Section \ref{sec:duality-lyap} establishes the duality
between the two positivity properties and its extension to unmixed
matrices. Section \ref{sec:main-res} proves the companion-matrix theorem. 
Section \ref{sec:main-res-ext} discusses the extension of the class of state matrices $A$ for which (\ref{eq:intro-adjoint}) maps positive semidefinite matrices $R$ into entrywise nonnegative solutions. Section \ref{sec:interpreta}
discusses the control and filtering interpretations.
Finally, some conclusions are drawn in Section \ref{sec:conclusions}.

{\bf Notation.}
Throughout this paper, \(\Sn\) denotes the space of real symmetric
\(n\times n\) matrices, equipped with the Frobenius inner product
\[
  \langle X,Y\rangle:=\operatorname{tr}(XY).
\]
For $X\in\Sn$, the notation \(X\entrygeq0\) (\(X> 0\)) means that \(X\) is entrywise nonnegative
(entrywise positive),
whereas \(X\succeq 0\) (\(X\succ 0\)) denotes positive semidefiniteness (positive definiteness).
For a complex matrix $P$, $P^\top$ and $P^*$ denote respectively the transpose and  the conjugate transpose of $P$. If $P$ is a square matrix, $\adj(P)$ denotes the adjugate of $P$. For a complex number $z$, $\Re[z]$ and $\Im[z]$ denote respectively the real part and the imaginary part of $z$.
We also define $\Hplus:=\{z\in\C:\ \Re[z]>0\}$.
A real matrix $M$ is called {\em accretive} if $M+M^\top\succeq0$.

\section{Auxiliary results}\label{sec:aux-res}

First we establish a Lemma which seems to be more or less common knowledge in the literature.
However, we have only seen a similar statement in a paper that has only appeared in Arxiv \cite[Fact 4]{Fisk}: the proof reported there is so terse that we have not been able to confirm it.
\begin{lemma}\label{lem:coefficient-sign}
Let
\[
P(s,t)=\sum_{i,j=0}^\ell c_{ij}s^it^j\in\R[s,t]
\]
be a polynomial. Suppose
\begin{equation}\label{eq:nonvanishing}
P(s,t)\ne0\qquad\text{for all }s,t\in\Hplus.
\end{equation}
Then all nonzero coefficients of $P$ have the same sign.
This sign is the same as that of $P(1,1)=\sum_{i,j=0}^\ell c_{ij}$.
\end{lemma}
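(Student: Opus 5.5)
The plan is to reduce to the classical one-variable fact and to pass from two variables to one by differentiation and a limit. The one-variable fact is this: a nonzero real polynomial $p(s)$ with no zeros in $\Hplus$ has all nonzero coefficients of the same sign. Indeed, its roots lie in $\Re[z]\le0$, so $p$ is a constant times a product of factors $s+a$ with $a\ge0$ and $s^2+bs+c$ with $b,c\ge0$. Such a product has nonnegative coefficients. Throughout, normalize $P(1,1)>0$; this is legitimate because $P(1,1)\neq0$ by \eqref{eq:nonvanishing}. Write
\[
P(s,t)=\sum_{j=0}^{d}p_j(s)\,t^j ,\qquad p_j\in\R[s],\quad p_d\not\equiv0 .
\]
The aim is to show that each $p_j$ is either $\equiv0$ or has no zeros in $\Hplus$.

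\textbf{Step 1 (differentiation preserves nonvanishing).} Fix $s\in\Hplus$. The complex polynomial $t\mapsto P(s,t)$ is not identically zero and has all its roots in $\{\Re[t]\le0\}$. This set is convex, so by Gauss--Lucas every derivative $\partial_t^jP(s,\cdot)$ is either identically zero or has all its roots there too. Hence $F_j:=\partial_t^jP$ has no zeros on $\Hplus\times\Hplus$, except on slices $\{s_0\}\times\Hplus$ where $F_j(s_0,\cdot)\equiv0$. Such a slice occurs exactly when $p_j(s_0)=\dots=p_d(s_0)=0$. This cannot happen for $j=d$ (then $p_d(s_0)=0$ alone would kill the slice), nor for any $j$ if $p_d$ has no zeros in $\Hplus$.

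\textbf{Step 2 (limit $t\to0^+$).} I would argue by downward induction on $j$. For $j=d$, note that $F_d=d!\,p_d(s)$ is independent of $t$. By Step 1 it has no zeros in $\Hplus$, because a zero $s_0$ of $p_d$ would give $F_d(s_0,\cdot)\equiv0$, a zero slice, and such slices are excluded for $j=d$. Therefore every slice $F_j(s,\cdot)$ with $s\in\Hplus$ is a nonzero polynomial of degree $d-j$, so $F_j$ is zero-free on $\Hplus^2$ for every $j$. Now let $t\to0^+$ along the reals. The holomorphic functions $s\mapsto F_j(s,t)$ are zero-free on $\Hplus$ and converge locally uniformly to $j!\,p_j(s)$. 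By Hurwitz's theorem, $p_j$ is either $\equiv0$ or zero-free on $\Hplus$.

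\textbf{Step 3 (signs).} By the one-variable fact, each nonzero $p_j$ has coefficients of a single sign. Call it $\sigma_j$, the sign of $p_j(1)$. Next, $t\mapsto P(1,t)$ is a real polynomial with no zeros in $\Hplus$. Its coefficients $p_j(1)$ therefore have a common sign, the sign of $P(1,1)>0$. Two points need care here. A nonzero $p_j$ has $p_j(1)\neq0$, since $1\in\Hplus$. And whenever $p_j(1)\neq0$, we get $p_j(1)>0$. Hence $\sigma_j=+$ for all $j$, so every nonzero $c_{ij}$ is positive, which proves the lemma.

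\textbf{Main obstacle.} The delicate points are Steps 1--2. One must make sure that degree drops in $t$ do not break Gauss--Lucas, and that the limit is taken in a way where Hurwitz's theorem applies. Both points are settled by first establishing that the leading coefficient $p_d$ is zero-free on $\Hplus$.
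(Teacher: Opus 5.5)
\textbf{The gap: zero-freeness of $p_d$.} Steps 1--2 rest on the claim that the leading coefficient $p_d$ has no zeros in $\Hplus$, as you note yourself at the end, but your argument for it is circular. In Step 1 you say a zero slice ``cannot happen for $j=d$ (then $p_d(s_0)=0$ alone would kill the slice)''. In Step 2 you conclude that $p_d$ is zero-free because a zero $s_0$ would give a zero slice, ``and such slices are excluded for $j=d$''. Nothing in your argument excludes them. The parenthetical only says that $F_d(s_0,\cdot)\equiv0$ exactly when $p_d(s_0)=0$, which is the statement you need to prove.

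Gauss--Lucas is silent here. If $p_d(s_0)=0$, then $P(s_0,\cdot)$ is just a lower-degree polynomial with roots in $\{\Re[t]\le0\}$. That is fully consistent with Gauss--Lucas and has $\partial_t^dP(s_0,\cdot)\equiv0$. Your $t\to0^+$ Hurwitz limit does not help for $j=d$ either, because $F_d=d!\,p_d(s)$ does not depend on $t$.

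\textbf{The fix.} The claim is true, and you can prove it with the tool you already use, applied at the other end of the $t$-axis. For real $t>0$, the functions
\[
s\mapsto t^{-d}P(s,t)=p_d(s)+p_{d-1}(s)t^{-1}+\cdots+p_0(s)t^{-d}
\]
are zero-free on $\Hplus$. They converge locally uniformly to $p_d$ as $t\to+\infty$. By Hurwitz's theorem, $p_d\not\equiv0$ is therefore zero-free on $\Hplus$. Equivalently, run your $t\to0^+$ argument on $t^dP(s,1/t)$, since $t\mapsto1/t$ preserves $\Hplus$.

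With this step inserted, the rest goes through: Gauss--Lucas in $t$, Hurwitz as $t\to0^+$, the one-variable fact for each $p_j$, and the sign matching via $P(1,t)$.

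\textbf{Comparison with the paper.} Once repaired, your proof takes a genuinely different route. The paper avoids the degree-drop problem with the shear $P_\varepsilon(s,t)=P(s,t+\varepsilon s)$. This makes the leading coefficient in $s$ the nonzero constant $\gamma(\varepsilon)$. The paper then differentiates in $s$, evaluates at $s=\delta>0$, and passes to the limits $\delta\to0^+$ and $\varepsilon\to0^+$ only on nonnegative coefficients. So it never needs Hurwitz's theorem. Your route uses Hurwitz's theorem instead, and in return proves a stronger intermediate fact: each nonzero coefficient polynomial $p_j(s)$ is itself zero-free on $\Hplus$.
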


\begin{proof}
We first establish two elementary one-variable facts.
These facts are indeed well-known but we provide a proof for completeness.

\noindent\textit{Fact 1: Property of Lemma \ref{lem:coefficient-sign} holds for real one-variable polynomials.}
Let $f=\sum_{i=0}^n f_iz^i\in\R[z]$ and suppose $f(z)\ne0$ for $z\in\Hplus$.
Then, every root of $f$ lies in the closed left half-plane and factoring over
$\R$, we may write
\begin{equation}\label{eq:real-factorization}
f(z)=\kappa
\prod_{k=1}^{r}(z+\alpha_k)
\prod_{l=1}^{m}
\bigl(z^2+2\beta_l z+\omega_l^2\bigr),
\end{equation}
where $\kappa\in\R\setminus\{0\}$,
$\alpha_k\ge0$, $\beta_l\ge0$, and $\omega_l>0$.
Hence, $f_i=\kappa s_i$, where $s_i$ is nonnegative because it is given by a sum of products of the nonnegative terms $\alpha_k\ge0$, $\beta_l\ge0$, and $\omega_l>0$.

\medskip

\noindent\textit{Fact 2: Differentiation preserves nonvanishing region (Gauss--Lucas Theorem).}
Let $f=\sum_{i=0}^n f_iz^i\in\C[z]$ have degree $n\ge1$ and no zeros in $\Hplus$.
Write its roots, repeated with multiplicity, as $\rho_1,\ldots,\rho_n$.
For $z\in\Hplus$,
\[
\frac{f'(z)}{f(z)}=\sum_{i=1}^{n}\frac{1}{z-\rho_i}.
\]
Since $\Re\rho_i\le0$,
\[
\Re\!\left(\frac{1}{z-\rho_i}\right)
=\frac{\Re z-\Re\rho_i}{|z-\rho_i|^2}>0.
\]
The sum therefore has strictly positive real part. In particular: $f'(z)\ne0$ on $\Hplus$. Repeating this argument shows that
\begin{equation}\label{eq:derivatives-stable}
f^{(p)}(z)\ne0\qquad(z\in\Hplus,\ 0\le p\le n).
\end{equation}

\medskip

Before addressing the bivariate case we normalize the sign.
Since $P$ is real,
continuous, and nonzero on the connected set $(0,\infty)^2$,  its
sign is constant there. 
If this sign is negative we can replace $P$ by
$-P$, which has the opposite sign and opposite coefficients.
Therefore, it is enough to prove that all the coefficients are nonnegative under the assumption
\begin{equation}\label{eq:positive-orthant}
P(u,v)>0\qquad\text{for all real }u,v>0.
\end{equation}

\medskip

If $P$ is constant, the conclusion is immediate. Hence we can safely assume that
$d\ge1$ is the total degree of $P$. Let the  homogeneous part of total degree $d$ be
\[
H_d(s,t)=\sum_{i+j=d}c_{ij}s^it^j\not\equiv 0.
\]
The one-variable polynomial
\[
\gamma(\varepsilon):=H_d(1,\varepsilon)
=\sum_{j=0}^{d}c_{d-j,j}\varepsilon^j
\]
is not identically zero. It has only finitely many real zeros.
Consequently, there is $\varepsilon_0>0$ such that
$\gamma(\varepsilon)\ne0$ whenever $0<\varepsilon<\varepsilon_0$.

Fix such an $\varepsilon$ and define
\begin{equation}\label{eq:linear-change}
P_\varepsilon(s,t):=P(s,t+\varepsilon s).
\end{equation}
If $s,t\in\Hplus$, then $t+\varepsilon s\in\Hplus$. Thus
\begin{equation}\label{eq:changed-stability}
P_\varepsilon(s,t)\ne0\qquad(s,t\in\Hplus).
\end{equation}
Also, $P_\varepsilon(u,v)>0$ for real $u,v>0$, by
\eqref{eq:positive-orthant}.

Expanding \eqref{eq:linear-change} shows that
\begin{equation}\label{eq:constant-leading}
P_\varepsilon(s,t)
=\gamma(\varepsilon)s^d+\sum_{p=0}^{d-1}b_{\varepsilon,p}(t)s^p,
\qquad b_{\varepsilon,p}\in\R[t].
\end{equation}
Indeed, a term $c_{ij}s^i(t+\varepsilon s)^j$ can contribute to $s^d$
only if $i+j=d$, in which case its contribution is
$c_{ij}\varepsilon^j$. In particular, the coefficient of $s^d$ is
independent of $t$ and nonzero. Therefore
$P_\varepsilon(\,\cdot\,,t)$ has degree exactly $d$ for every
$t\in\C$; no specialization can lower that degree.

\medskip

Now we prove that the  coefficients of $P_\varepsilon(s,t)$
are nonnegative for all $0<\varepsilon<\varepsilon_0$.
Fix $p\in\{0,\ldots,d\}$ and a real number $\delta>0$. Define
\begin{equation}\label{eq:q-definition}
q_{\varepsilon,\delta,p}(t)
:=\left.\frac{\partial^p}{\partial s^p}P_\varepsilon(s,t)
\right|_{s=\delta}\in\R[t].
\end{equation}
For any fixed $t\in\Hplus$, the polynomial
$s\mapsto P_\varepsilon(s,t)$ has degree $d$ and no zeros in
$\Hplus$, by \eqref{eq:constant-leading} and
\eqref{eq:changed-stability}. Fact~2, evaluated at $s=\delta$,
therefore gives
\begin{equation}\label{eq:q-nonvanishing}
q_{\varepsilon,\delta,p}(t)\ne0
\qquad\text{for every }t\in\Hplus.
\end{equation}

Next take a real $v>0$. The real polynomial
$s\mapsto P_\varepsilon(s,v)$ has no zeros in $\Hplus$ and is positive
for real $s>0$. By Fact~1 all its coefficients are nonnegative.
Its leading coefficient is nonzero and hence strictly positive.
It follows that
\begin{equation}\label{eq:q-positive}
q_{\varepsilon,\delta,p}(v)>0
\qquad\text{for every real }v>0.
\end{equation}

Apply Fact~1 again, now to the real polynomial
$q_{\varepsilon,\delta,p}$ in the variable $t$.
Equations \eqref{eq:q-nonvanishing} and \eqref{eq:q-positive} show
that every coefficient of this polynomial is nonnegative.

Write
\[
P_\varepsilon(s,t)=\sum_{i=0}^{d}\sum_{j=0}^{\ell}
c_{ij}^{(\varepsilon)}s^it^j.
\]
For each $j$, the coefficient of $t^j$ in
\eqref{eq:q-definition} is
\[
\sum_{i=p}^{d}\frac{i!}{(i-p)!}
c_{ij}^{(\varepsilon)}\delta^{i-p}\ge0.
\]
Letting $\delta\to 0^+$ in this finite sum yields
$p!\,c_{pj}^{(\varepsilon)}\ge0$ so that
\[
c_{pj}^{(\varepsilon)}\geq 0,\qquad \forall \varepsilon\in(0,\varepsilon_0).
\]
Since $p$ and $j$ were arbitrary, every coefficient of $P_\varepsilon$
is nonnegative. 

To reach the conclusion, it is now sufficient to 
compute the binomial expansion of $P(s,t+\varepsilon s)$ 
and observe that
$c_{ij}^{(\varepsilon)}$ is a polynomial in $\varepsilon$ of the form
\begin{equation}
c_{ij}^{(\varepsilon)}=c_{i,j}+\varepsilon c_{i-1,j+1}\binom{j+1}{j}+\dots + \varepsilon^l c_{i-l,j+l}\binom{j+l}{j},
\end{equation}
where $l:=\min\{i,\ell-j\}$,  so that
\[
\lim_{\varepsilon\to 0^+}c_{ij}^{(\varepsilon)}=c_{ij}.
\]
Thus every coefficient of the  polynomial $P$
is nonnegative.  
If we eliminate the sign normalization, we conclude that all the nonzero coefficients $c_{ij}$ of $P$ have the same sign.
The fact that this sign coincides with the sign of 
$P(1,1)=\sum_{i,j=0}^\ell c_{ij}$ is obvious.
\end{proof}

The following result, which uses Lemma \ref{lem:coefficient-sign}, is one of the cornerstones for the proof of one of our main results.

\begin{lemma}\label{lem:pairwise-adjugate}
Let $M\in\R^{n\times n}$ satisfy $M+M^\top\succeq0$, and write
\begin{equation}\label{eq:F-polynomial}
F(z):=\adj(zI+M)=\sum_{p=0}^{n-1}z^pF_p.
\end{equation}
Then
\begin{equation}\label{eq:pairwise-adjugate}
F_pF_q^\top+F_qF_p^\top\succeq0
\qquad(0\le p,q\le n-1).
\end{equation}
\end{lemma}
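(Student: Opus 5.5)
The plan is to reduce \eqref{eq:pairwise-adjugate} to a scalar statement about coefficients of a bivariate polynomial, and then obtain that statement from Lemma~\ref{lem:coefficient-sign}. Fix $x\in\R^n$ and put
\[
\Phi_x(s,t):=x^\top\bigl(F(s)F(t)^\top+F(t)F(s)^\top\bigr)x=\sum_{p,q=0}^{n-1}s^pt^q\,x^\top\bigl(F_pF_q^\top+F_qF_p^\top\bigr)x .
\]
By the double sum, the claim is that every coefficient of the real polynomial $\Phi_x$ is nonnegative, for every $x$. The case $s=t=1$ gives the sign: $\Phi_x(1,1)=2\|F(1)^\top x\|^2\ge0$. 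So it would be enough to show that $\Phi_x$ has no zeros on $\Hplus\times\Hplus$ and then apply Lemma~\ref{lem:coefficient-sign}. This fails as stated, since $\Phi_x$ may vanish identically or have zeros there. I would therefore embed $\Phi_x$ as a first-order coefficient of a determinant that is genuinely zero-free, and pass to the limit.

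\textbf{Auxiliary determinant.} For $s,t\in\C$, $\eta\ge0$ and $\mu>0$, consider the $2n\times 2n$ matrix
\[
B(s,t)=\begin{bmatrix} sI+M+\mu I & \eta\, xx^\top\\ -\eta\, xx^\top & tI+M^\top+\mu I\end{bmatrix},
\]
possibly with identity blocks in the off-diagonal positions, tuned so that the Schur complement produces the cross term $(sI+M)^{-1}(tI+M^\top)^{-1}$. The off-diagonal blocks are skew-paired, so they cancel in the Hermitian part, which is $\operatorname{diag}\bigl((\Re s+\mu)I+\tfrac12(M+M^\top),\,(\Re t+\mu)I+\tfrac12(M+M^\top)\bigr)\succ0$ on $\Hplus^2$. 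A matrix with positive definite Hermitian part is nonsingular. Hence $D(s,t):=\det B(s,t)$ is a real polynomial with no zeros on $\Hplus\times\Hplus$, and $D(1,1)>0$. Lemma~\ref{lem:coefficient-sign} then says that all coefficients of $D$ in $s,t$ are nonnegative. This holds for every fixed $\eta$ and $\mu$, and since these coefficients are polynomials in $\eta,\mu$, it also holds in the limit.

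\textbf{Rank-one extraction.} Using the Schur complement and the matrix determinant lemma, I would expand $D$ in powers of the rank-one parameter $\eta$. The lowest nontrivial term should be proportional to a bilinear expression in $x^\top\adj(sI+M)$ and $\adj(tI+M^\top)x$, that is, to $x^\top F(s)F(t)^\top x$. Taking the symmetric part in $(s,t)$ gives $\Phi_x$, using $\adj(AB)=\adj(B)\adj(A)$ to rewrite $F(s)F(t)^\top=\adj\bigl((tI+M^\top)(sI+M)\bigr)$. Every coefficient of $D$ is nonnegative for all small $\eta>0$, and the $\eta^0$ part $\det(sI+M)\det(tI+M^\top)$ has known coefficients. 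Dividing the increment by the appropriate power of $\eta$ and letting $\eta\to0^+$, then $\mu\to0^+$, shows that the coefficients of $\Phi_x$ are nonnegative. If $D$ is not symmetric in $s,t$, I would apply the lemma to $D(s,t)+D(t,s)$ instead, which is still zero-free on $\Hplus^2$ and positive at $(1,1)$ as a sum of two such functions.

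\textbf{Main obstacle.} The difficult step is the exact choice of the bordered block matrix. It must satisfy two conditions at once. First, its Hermitian part must stay positive definite on $\Hplus^2$, which forces skew coupling. Second, its leading $\eta$-correction must be exactly $\Phi_x$, with no uncontrolled terms of the same order that carry the wrong sign. My first attempt would be the coupling with identities, $\begin{bmatrix} sI+M & -I\\ I & tI+M^\top\end{bmatrix}$, whose determinant is $\det\bigl((sI+M)(tI+M^\top)+I\bigr)$. I would then add the rank-one perturbation $\eta\,xx^\top$ inside the product and read off $\partial_\eta$ at $\eta=0$ via Jacobi's formula. The remaining task is to check that this derivative reduces to $\Phi_x$ as the identity coupling is scaled to zero.
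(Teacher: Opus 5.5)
Your overall plan is the paper's: apply Lemma~\ref{lem:coefficient-sign} to an auxiliary determinant with no zeros on $\Hplus\times\Hplus$, then read off a rank-one coefficient. But the step you call the main obstacle is where all the work lies, and neither of your candidates settles it, so there is a genuine gap.

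\textbf{First candidate.} The bordered matrix with blocks $\pm\eta xx^\top$ has Schur complement $A_{22}+\eta^2(x^\top A_{11}^{-1}x)\,xx^\top$. Hence $\det B=a_\mu(s)a_\mu(t)+\eta^2\bigl(x^\top\adj((s+\mu)I+M)x\bigr)\bigl(x^\top\adj((t+\mu)I+M^\top)x\bigr)$, where $a_\mu(z)=\det((z+\mu)I+M)$. The coupling factors through the scalar $x^\top A_{11}^{-1}x$. The most this construction can give is therefore $(x^\top F_px)(x^\top F_qx)\ge0$, not $x^\top F_qF_p^\top x\ge0$.

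\textbf{Second candidate.} The determinant $\det\bigl((sI+M)(tI+M^\top)+\epsilon^2I+\eta xx^\top\bigr)$ has the right rank-one term. However, $\eta xx^\top$ placed inside the product is not a skew-paired block perturbation. Your Hermitian-part argument covers only the scalar coupling $\epsilon^2I$ and does not apply. As $\epsilon\to0$ this is just the paper's $D_\tau$ with $M$ and $M^\top$ interchanged, so the identity coupling adds nothing.

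\textbf{Missing idea.} You need a direct proof that $\det\bigl((sI+M^\top)(tI+M)+\tau xx^\top\bigr)\ne0$ on $\Hplus\times\Hplus$ for every $\tau\ge0$. The paper writes $sI+M^\top=(s-\bar t)I+(tI+M)^*$. A null vector $z$ then gives $(s-\bar t)\,z^*(tI+M)z+\|(tI+M)z\|^2+\tau|x^\top z|^2=0$. Accretivity gives $\Re[z^*(tI+M)z]\ge\Re[t]\,\|z\|^2$. Combined with Cauchy--Schwarz, $|z^*(tI+M)z|\le\|z\|\,\|(tI+M)z\|$, this forces $\Re[s]\le0$, a contradiction.

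\textbf{Extraction step.} This also fails as written. Let $a(z)=\det(zI+M)=\sum_j a_jz^j$. By the matrix determinant lemma, for $p,q\le n-1$ the coefficient of $s^pt^q$ is exactly affine in the rank-one parameter: $a_pa_q+\tau\,x^\top F_qF_p^\top x$. This uses the ordering $(sI+M^\top)(tI+M)$; your ordering gives $F_q^\top F_p$, i.e.\ the lemma for $M^\top$, which is harmless. Since $a_pa_q$ is in general positive, nonnegativity for small $\tau$ says nothing about the slope. You must divide by $\tau$ and let $\tau\to\infty$. That is why nonvanishing is needed for all $\tau\ge0$, not just near $0$.

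\textbf{Symmetrization fallback.} Applying the lemma to $D(s,t)+D(t,s)$ is unjustified. A sum of two polynomials without zeros on $\Hplus\times\Hplus$ can vanish there: $(s+1)^3+(t+1)^3=0$ at $t=2$, $s=\tfrac12+\tfrac{3\sqrt3}{2}i$. The fallback is also unnecessary, since $x^\top F_pF_q^\top x=x^\top F_qF_p^\top x$.
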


\begin{proof}
Fix $x\in\R^n$ and $\tau\ge0$, and introduce
\begin{equation}\label{eq:D-tau}
D_\tau(s,t):=
\det\!\left((sI+M^\top)(tI+M)+\tau xx^\top\right).
\end{equation}

\medskip

We first show that
\begin{equation}\label{eq:bivariate-stability}
D_\tau(s,t)\ne0
\qquad\text{whenever }\Re s>0\text{ and }\Re t>0.
\end{equation}
Suppose, by way of contradiction, that this is not the case. Then, there  exist
$s,t\in\Hplus$ and $z\in\C^n$ with $z\neq 0$ such that
\[
(sI+M^\top)(tI+M)z+\tau xx^\top z=0.
\]
By multiplying this equation on the left by $z^*$, we get
\begin{equation}\label{eq:energy-identity}
\begin{aligned}
z^*(sI-\bar{t}I+\bar{t}I+M^\top)(tI+M)z+\tau z^*xx^\top z&=&
(s-\bar{t})z^*(tI+M)z\\
&&+\|(tI+M)z\|^2+\tau |x^\top z|^2\\
&=0.
\end{aligned}
\end{equation}
Notice that, by accretivity of $M$ we have
\begin{equation}\label{eq:prod-positive}
\Re[z^*(tI+M)z]=\Re[t]\|z\|^2 +\frac{1}{2}z^*(M+M^\top)z\geq \Re[t]\|z\|^2>0, 
\end{equation}
where the last inequality, follows from $t\in\Hplus$ and  $z\neq 0$.
In particular, $z^*(tI+M)z\ne0$. Consequently, we can rewrite 
\eqref{eq:energy-identity} as
\begin{equation}\label{eq:energy-identity-2}
s=\bar{t}-\frac{\|(tI+M)z\|^2+\tau |x^\top z|^2}{z^*(tI+M)z}.
\end{equation}
The latter, together with \eqref{eq:prod-positive} gives
\begin{equation}
\begin{aligned}
\Re[s]&
 =\Re[t]-\Re\left[\overline{z^*(tI+M)z}\right]\frac{\|(tI+M)z\|^2+\tau |x^\top z|^2}{|z^*(tI+M)z|^2}\\
 &
 =\Re[t]-\Re[z^*(tI+M)z]\frac{\|(tI+M)z\|^2+\tau |x^\top z|^2}{|z^*(tI+M)z|^2}
 \\
 &
 \leq
 \Re[t]-\Re[t]\frac{\|z\|^2\left(\|(tI+M)z\|^2+\tau |x^\top z|^2\right)}{|z^*(tI+M)z|^2}\\
 &
 \leq
 \Re[t]-\Re[t]\frac{\|z\|^2\|(tI+M)z\|^2}{|z^*(tI+M)z|^2}\leq 0,
 \end{aligned}
\end{equation}
where the last inequality is a direct consequence of the
Cauchy--Schwarz inequality.
This is a contradiction because we are assuming $\Re[s]>0$.
Hence,
\eqref{eq:bivariate-stability} holds.

\medskip

Now, we are in position to use Lemma \ref{lem:coefficient-sign}. In fact, 
for each $\tau\geq 0$
$D_\tau(s,t)$ is a polynomial with real coefficients and 
we have just seen that it does not vanish for $s,t\in\Hplus$.
Hence, all the nonzero coefficients of $D_\tau(s,t)$ have the same sign.
We now argue that this sign is positive.
Indeed, for $s=t=1$, 
$
D_\tau(1,1)=
\det\!\left((I+M^\top)(I+M)+\tau xx^\top\right).
$
Since $M$ is accretive and $\tau\geq 0$, we have
$(I+M^\top)(I+M)+\tau xx^\top\succeq I$, so that
$D_\tau(1,1)\geq \det(I)=1.$

\medskip

The final step uses the so-called rank-one determinant identity and
the fact that
$\adj(BC)=\adj(C)\adj(B)$.
Let $a(z):=\det(zI+M)=\sum_{j=0}^n a_jz^j$. 
We have 
\begin{align}
D_\tau(s,t)
&=a(s)a(t)
+\tau x^\top\adj\!\left((sI+M^\top)(tI+M)\right)x\notag\\
&=a(s)a(t)+\tau x^\top F(t)F(s)^\top x.
\label{eq:rank-one-expansion}
\end{align}
For $0\le p,q\le n-1$,
the coefficient of $s^pt^q$ in \eqref{eq:rank-one-expansion} is
\[
a_pa_q+\tau x^\top F_qF_p^\top x\ge0,
\qquad\text{for every }\tau\ge0.
\]
This expression is affine in $\tau$. Therefore its slope is nonnegative:
\[
x^\top F_qF_p^\top x\ge0.
\]
Since
\[
x^\top F_qF_p^\top x
=\frac12x^\top
\left(F_pF_q^\top+F_qF_p^\top\right)x
\]
and $x\in\R^n$ was arbitrary, the lemma follows.
\end{proof}

\section{Lyapunov equations inducing a connection between two types of positivity}
\label{sec:duality-lyap}

Next we consider two Lyapunov equations
connected by duality. We show that these equations could provide an interesting relation between two notions of positivity.
This result, besides being instrumental for proving 
a conjecture that was only partially addressed in \cite{Ferrante2026},
is relevant {\em per se} as discussed in Section \ref{sec:interpreta}.

Let \(A\in\R^{n\times n}\) be Hurwitz and $Q,R\in\Sn$.
Consider the two Lyapunov equations
\begin{equation}\label{eq:lyap-primal}
  AP+PA^{\top}=-Q.
\end{equation}
and
\begin{equation}\label{eq:lyap-adjoint}
  A^{\top}X+XA=-R.
\end{equation}
We have the following result.
\begin{theorem}\label{theo:duality}
Let \(A\in\R^{n\times n}\) be Hurwitz and $Q,R\in\Sn$.
Then the solution $P$ of (\ref{eq:lyap-primal}) is positive semidefinite
for all $Q\geq 0$ if and only if the solution $X$ of (\ref{eq:lyap-adjoint})
is entrywise nonnegative for all $R\succeq0$.
\end{theorem}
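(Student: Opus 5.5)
The plan is to define the two solution operators on $\Sn$ and show that they are adjoint, then transfer positivity through the self-duality of the two cones. Since $A$ is Hurwitz, both equations have unique symmetric solutions, so we can set
\[
\TA(Q):=\int_0^\infty e^{At}Qe^{A^\top t}\,dt,\qquad \TAs(R):=\int_0^\infty e^{A^\top t}Re^{At}\,dt .
\]
The first step is to verify $\langle \TA(Q),R\rangle=\langle Q,\TAs(R)\rangle$ for all $Q,R\in\Sn$. This follows from cyclicity of the trace under the integral:
\[
\operatorname{tr}\bigl(e^{At}Qe^{A^\top t}R\bigr)=\operatorname{tr}\bigl(Qe^{A^\top t}Re^{At}\bigr).
\]
Alternatively, one can argue algebraically from the two equations: $\operatorname{tr}(PR)=-\operatorname{tr}(P(A^\top X+XA))=-\operatorname{tr}((AP+PA^\top)X)=\operatorname{tr}(QX)$. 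The algebraic route has the advantage that it extends verbatim to the unmixed case mentioned in the introduction.

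The second step records the two self-duality facts in $\Sn$ with the Frobenius inner product.
\begin{itemize}
\item A matrix $P\in\Sn$ satisfies $P\succeq0$ iff $\operatorname{tr}(PR)\ge0$ for all $R\succeq0$. For necessity, write $R=\sum v_iv_i^\top$. For sufficiency, take $R=vv^\top$.
\item A matrix $X\in\Sn$ satisfies $X\ge0$ entrywise iff $\operatorname{tr}(XQ)\ge0$ for all symmetric $Q\ge0$. For necessity, note $\operatorname{tr}(XQ)=\sum_{ij}X_{ij}Q_{ij}$. For sufficiency, test against $Q=E_{ii}$ and $Q=E_{ij}+E_{ji}$, which gives $X_{ii}\ge0$ and $2X_{ij}\ge0$.
\end{itemize}

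The third step combines these. Suppose $\TA(Q)\succeq0$ for every $Q\ge0$. Fix $R\succeq0$. Then for every symmetric $Q\ge0$ we have $\langle Q,\TAs(R)\rangle=\langle\TA(Q),R\rangle\ge0$, since both $\TA(Q)$ and $R$ are positive semidefinite. By the second cone fact, $\TAs(R)\ge0$ entrywise.

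Conversely, suppose $\TAs(R)\ge0$ for every $R\succeq0$. Fix $Q\ge0$. Then for every $R\succeq0$ we have $\langle\TA(Q),R\rangle=\langle Q,\TAs(R)\rangle\ge0$, since both factors are entrywise nonnegative. By the first cone fact, $\TA(Q)\succeq0$.

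There is no real obstacle here. The only point requiring care is that the test families are large enough: the cone statements must be applied with the full sets of test matrices ($vv^\top$, $E_{ii}$, $E_{ij}+E_{ji}$), and these must be symmetric so that they lie in $\Sn$.
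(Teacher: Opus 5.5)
Your proof is correct and follows the paper's argument. Both prove that the two integral operators are adjoint via trace cyclicity and then use the self-duality of the positive semidefinite cone and of the symmetric entrywise nonnegative cone, with the same test matrices $vv^\top$, $E_{ii}$, $E_{ij}+E_{ji}$; the paper first applies them to $\mathcal{T}_A^*(xx^\top)$ and then sums over a rank-one decomposition of $R$, which is your first cone fact written out. Your algebraic identity $\operatorname{tr}(PR)=\operatorname{tr}(QX)$ is a slightly more economical route to the unmixed extension than the paper's Kronecker-product computation.
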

\begin{proof}
Define the linear operator
\begin{equation}\label{eq:TA}
  \TA(Q)
  :=
  \int_{0}^{\infty}e^{At}Qe^{A^{\top}t}\,dt,
  \qquad Q\in\Sn.
\end{equation}
Then \(P=\TA(Q)\) is the unique solution of
(\ref{eq:lyap-primal}).
Define also the dual linear operator
\begin{equation}\label{eq:TA-adjoint}
  \TAs(R)
  :=
  \int_{0}^{\infty}e^{A^{\top}t}Re^{At}\,dt,
  \qquad R\in\Sn,
\end{equation}
so that \(X=\TAs(R)\) is the unique solution of
(\ref{eq:lyap-adjoint}).

Notice that the operator $\TAs$ is the adjoint of \(\TA\) with respect to the Frobenius inner product.
In fact, for every \(Q,R\in\Sn\), we have
\begin{equation}\label{eq:adjoint-identity}
\begin{aligned}
  \langle R,\TA(Q)\rangle&=
  \operatorname{tr}\left(R\int_{0}^{\infty}e^{At}Qe^{A^{\top}t}\,dt\right)\\
&=
 \int_{0}^{\infty}\operatorname{tr}\left(Re^{At}Qe^{A^{\top}t}\right)\,dt\\
&=
 \int_{0}^{\infty}\operatorname{tr}\left(e^{A^{\top}t} Re^{At}Q\right)\,dt\\
&=\operatorname{tr}\left(\int_{0}^{\infty}e^{A^{\top}t}Re^{At}\,dt\  Q\right)\\
&=
  \langle\TAs(R),Q\rangle.
  \end{aligned}
\end{equation}
Thus, we have to prove that the following properties are
equivalent:
\begin{align}
R=R^\top\succeq0
  &\quad\Longrightarrow\quad
  \TAs(R)\entrygeq0,
  \label{eq:property-adjoint}\\
  Q=Q^\top\entrygeq0
  &\quad\Longrightarrow\quad
  \TA(Q)\succeq0.
  \label{eq:property-primal}
\end{align}

Assume first that \eqref{eq:property-adjoint} holds. Then
\begin{equation}
\TAs(xx^{\top})\entrygeq0,\qquad \forall x\in\R^n.
\end{equation}
As a consequence, for all $x\in\R^n$ and $Q=Q^\top\entrygeq0$, we have:
\[
  x^{\top}\TA(Q)x
  =
  \operatorname{tr}(xx^{\top}\TA(Q))=\langle xx^{\top},\TA(Q)\rangle
  =\langle \TAs(xx^{\top}),Q\rangle\geq 0\]
which proves \(\TA(Q)\succeq0\) i.e. \eqref{eq:property-primal}.

Conversely, assume \eqref{eq:property-primal}. Fix \(x\in\R^n\) and denote by \(E_{ij}\in\R^{n\times n}\) the matrix whose
only nonzero entry is a \(1\) in position \((i,j)\). Choosing
\(Q=E_{ii}\entrygeq0\), \eqref{eq:property-primal} gives
\[
  [\TAs(xx^{\top})]_{ii}
  =
  \langle \TAs(xx^{\top}),E_{ii}\rangle
  =
  x^{\top}\TA(E_{ii})x
  \geq0.
\]
For \(i\neq j\), choose the symmetric entrywise nonnegative matrix
\(Q=E_{ij}+E_{ji}\). Then, \eqref{eq:property-primal} gives
\[
  [\TAs(xx^{\top})]_{ij}
  =\frac{1}{2}
  \langle \TAs(xx^{\top}),E_{ij}+E_{ji}\rangle
  =\frac{1}{2}
  x^{\top}\TA(E_{ij}+E_{ji})x
  \geq0.
\]
Therefore \(\TAs(xx^{\top})\entrygeq0\) for every \(x\).

Finally, by suitably selecting $r\leq n$ vectors $x_m \in\R^n$, every \(R\succeq0\) can be written as
\begin{equation}\label{R-sum}
  R=\sum_{m=1}^{r}x_mx_m^{\top}.
\end{equation}
By linearity,
\[
  \TAs(R)=\sum_{m=1}^{r}\TAs(x_mx_m^\top)\entrygeq0
\]
which proves \eqref{eq:property-adjoint}.
\end{proof}

\subsection{Removing Hurwitz assumption}
Notice that the assumption of $A$ being Hurwitz in Theorem \ref{theo:duality} can be considerably relaxed. Indeed, the result of this proposition holds as long as 
the Lyapunov equations (\ref{eq:lyap-primal}) and (\ref{eq:lyap-adjoint}) admit a (necessarily unique) solution for any $Q$ and $R$ or, equivalently, if $\lambda_1+\lambda_2\neq 0$ for any pair of
(not necessarily distinct) eigenvalues $\lambda_1,\lambda_2\in\sigma(A)$. 
In this case the matrix $A$ is called {\emph{unmixed}}.
To see this, it is sufficient to observe that if $A$ is unmixed
the solutions $P$ and $X$ of \eqref{eq:lyap-primal}
and \eqref{eq:lyap-adjoint} can respectively be expressed by the linear operators
\begin{equation}\label{eq:TAh}
  \TAh(Q)
  :=
  {\rm vec}^{-1}[(I\otimes A + A \otimes I)^{-1}{\rm vec}(-Q)],
  \qquad Q\in\Sn,
\end{equation}
and
\begin{equation}\label{eq:TAh-adjoint}
  \TAhs(R)
  :=
  {\rm vec}^{-1}[(I\otimes A^\top + A^\top \otimes I)^{-1}{\rm vec}(-R)],
  \qquad R\in\Sn,
\end{equation}
where $\otimes$ denotes the Kronecker product and ${\rm vec}$ is the linear operator mapping a matrix in $W\in\Sn$ into a
vector $w\in\R^{n^2}$ obtained by stacking the columns of $W$ one on top of the other. 
Thus, we only need to show that 
$\TAhs$ is the adjoint of \(\TAh\) with respect to the Frobenius inner product.
In fact, if this is the case, we can repeat verbatim the proof of Theorem \ref{theo:duality}.
The fact that $\TAhs$ is the adjoint of \(\TAh\) is a straightforward calculation.
Indeed, for any $Q,R\in\Sn$ we have
\begin{equation}\label{duality-deneral}
\begin{aligned}
\langle \TAh(Q),R\rangle&=\operatorname{tr}\left(\TAh(Q)R\right)
\\
&=[{\rm vec}(\TAh(Q))]^\top{\rm vec}(R)
\\
&=[{\rm vec}(-Q)]^\top(I\otimes A^\top + A^\top \otimes I)^{-1}{\rm vec}(R)
\\
&=-[{\rm vec}(Q)]^\top(I\otimes A^\top + A^\top \otimes I)^{-1}{\rm vec}(R)
\\
&=[{\rm vec}(Q)]^\top(I\otimes A^\top + A^\top \otimes I)^{-1}{\rm vec}(-R)
\\
&=[{\rm vec}(Q)]^\top{\rm vec}(\TAhs(R))\\
&=\langle Q,\TAhs(R)\rangle.
\end{aligned}
\end{equation}
In conclusion, the following stronger version of Theorem \ref{theo:duality} holds.
\begin{theorem}\label{prop:duality-stronger}
Let \(A\in\R^{n\times n}\) be unmixed and $Q,R\in\Sn$.
Then the solution $P$ of (\ref{eq:lyap-primal}) is positive semidefinite
for all $Q\geq 0$ if and only if the solution $X$ of (\ref{eq:lyap-adjoint})
is entrywise nonnegative for all $R\succeq0$.
\end{theorem}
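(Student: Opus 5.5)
The plan is to repeat the proof of Theorem~\ref{theo:duality} verbatim. The only analytic input there was that the solution maps $Q\mapsto P$ and $R\mapsto X$ are well defined, linear, and mutually adjoint for $\langle\cdot,\cdot\rangle$. Hurwitz stability entered solely through the integral representations \eqref{eq:TA}--\eqref{eq:TA-adjoint}. Under the unmixed hypothesis I will therefore work with $\TAh$ and $\TAhs$ from \eqref{eq:TAh}--\eqref{eq:TAh-adjoint} in place of the integrals.

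The first step is to check that these operators are well defined and map $\Sn$ into $\Sn$. The eigenvalues of the Kronecker sum $I\otimes A+A\otimes I$ are the sums $\lambda_i+\lambda_j$ with $\lambda_i,\lambda_j\in\sigma(A)$. By unmixedness none of these sums vanishes, so the Kronecker sum is invertible. The same holds for $I\otimes A^\top+A^\top\otimes I$, which is its transpose. Uniqueness then gives symmetry: if $P$ solves \eqref{eq:lyap-primal} with $Q$ symmetric, then $P^\top$ solves the same equation, so $P=P^\top$.

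The second step is the adjointness $\langle\TAh(Q),R\rangle=\langle Q,\TAhs(R)\rangle$. This is exactly the computation \eqref{duality-deneral}. It uses $\operatorname{tr}(YR)=\operatorname{vec}(Y)^\top\operatorname{vec}(R)$ for symmetric $Y$, together with the identity $\bigl((I\otimes A+A\otimes I)^{-1}\bigr)^\top=(I\otimes A^\top+A^\top\otimes I)^{-1}$.

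The third step runs the two implications as before.

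\emph{From the $X$-property to the $P$-property.} Assume $\TAhs(xx^\top)\entrygeq0$ for every $x\in\R^n$. Then for $Q\entrygeq0$,
\[
x^\top\TAh(Q)x=\langle\TAhs(xx^\top),Q\rangle\ge0,
\]
so $\TAh(Q)\succeq0$.

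\emph{From the $P$-property to the $X$-property.} Test against $Q=E_{ii}$ and $Q=E_{ij}+E_{ji}$. This shows that every entry of $\TAhs(xx^\top)$ is nonnegative. Writing any $R\succeq0$ as $R=\sum_m x_mx_m^\top$ and using linearity then gives $\TAhs(R)\entrygeq0$.

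I expect no real obstacle, since this is a bookkeeping extension. The only point needing care is the transpose of the inverse Kronecker sum, which follows from $(B\otimes C)^\top=B^\top\otimes C^\top$. The problem is also consistent under the transpose-of-$A$ symmetry: $A$ is unmixed if and only if $A^\top$ is, so both equations are uniquely solvable simultaneously.
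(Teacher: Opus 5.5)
Your proposal is correct and follows the paper's argument: you replace the integral operators by the Kronecker-inverse operators $\TAh,\TAhs$, verify their adjointness via the $\operatorname{vec}$ computation using $\bigl((I\otimes A+A\otimes I)^{-1}\bigr)^\top=(I\otimes A^\top+A^\top\otimes I)^{-1}$, and rerun the proof of Theorem~\ref{theo:duality} verbatim. Your explicit uniqueness argument that these operators map $\Sn$ into $\Sn$ is a small detail the paper leaves unstated; the entrywise test with $E_{ij}+E_{ji}$ implicitly relies on it.
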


We now show that the class of non-Hurwitz matrices for which 
 \eqref{eq:property-primal} holds
is empty. 
Let $A$ be non-Hurwitz and unmixed.   Consider a matrix $Q=Q^\top$ which is both (strictly) positive definite and entrywise nonnegative; for example $Q=I$. Let 
$P$ be the corresponding solution of \eqref{eq:lyap-primal} and assume, by way of contradiction,
that \eqref{eq:property-primal} holds.
Then $P\succeq 0$; moreover,  
\beq\label{contrad}
AP+PA^\top=-Q\preceq 0.
\eeq
Let $w$ be an eigenvector of $A^\top$ associated with an eigenvalue $\lambda$ with nonnegative real part ($\lambda$ exists because $A$ is non-Hurwitz). By multiplying \eqref{contrad} by $w$ on the right side and by 
$w^*$ on the left side, we get:
$w^*APw+w^*PA^\top w=
(\lambda +\overline{\lambda})w^*Pw
=2\Re[\lambda]w^*Pw
=-w^* Q w < 0$ which is a contradiction because  $2\Re[\lambda]\geq0$ and
$P\succeq 0$.

Theorem \ref{prop:duality-stronger} allows to conclude that 
also the class of non-Hurwitz matrices for which \eqref{eq:property-adjoint} holds
is empty which is not apparent by a direct analysis.

\section{A class of matrices $A$ for Theorem \ref{theo:duality}}
\label{sec:main-res}

The property in \eqref{eq:property-adjoint} is 
studied in the recent work \cite{Ferrante2026} for the case of 
\begin{equation}\label{comp-matrix}
A:=
\begin{bmatrix}
0&1&0&\cdots&0\\
0&0&1&\ddots&\vdots\\
\vdots&&\ddots&\ddots&0\\
0&\cdots&0&0&1\\
-\sigma_n&-\sigma_{n-1}&\cdots&-\sigma_2&-\sigma_1
\end{bmatrix}\in \R^{n\times n}
\end{equation}
being  companion associated with the real  Hurwitz polynomial
\begin{equation}\label{car-poly}
a(s)=s^n+\sigma_1s^{n-1}+\cdots+\sigma_n.
\end{equation}
That work proves that \eqref{eq:property-adjoint} holds
when the companion matrix \(A\) has only real eigenvalues, while the general case in which $A$ may have
complex-conjugate eigenvalue pairs remains an open conjecture.
Next we prove that such a conjecture is indeed a theorem by resorting to Theorem~\ref{theo:duality}.
To this aim, we first prove that 
Property (\ref{eq:property-primal})  holds for
$A$ given by \eqref{comp-matrix} and  $Q$ having the simple form  
\beq\label{elementary-Q}
Q=
Q_{kj}:=\varepsilon_k \varepsilon_l^\top+\varepsilon_l \varepsilon_k^\top,
\eeq
 where $\varepsilon_k$ denotes $k$-th canonical vector and $k,l=1,\dots,n$.

\begin{proposition}\label{prop:lyapunov-primal-sc}
Let $A$  be a real Hurwitz  companion matrix given by 
\eqref{comp-matrix}.
For $k,l\in\{1,\ldots,n\}$, let $Q_{kl}$ be given by \eqref{elementary-Q}
and 
$$
P_{kl}:=\int_0^\infty e^{At}Q_{kl}e^{A^\top t}\,dt
$$
be the corresponding solution of \eqref{eq:lyap-primal}.
Then
\begin{equation}\label{eq:P-positive}
 P_{kl}\succeq 0\qquad\text{for every } k,l=1,\dots, n.
\end{equation}
\end{proposition}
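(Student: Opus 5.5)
The plan is to reduce $P_{kl}$ to a congruence transform of $F_pF_q^\top+F_qF_p^\top$ for an accretive matrix $M$, and then apply Lemma~\ref{lem:pairwise-adjugate}. The reduction uses the controllability Gramian of the pair $(A,\varepsilon_n)$, which exists and is positive definite because this pair is controllable and $A$ is Hurwitz. Define
\[
W:=\int_0^\infty e^{At}\varepsilon_n\varepsilon_n^\top e^{A^\top t}\,dt\succ0,
\qquad AW+WA^\top=-\varepsilon_n\varepsilon_n^\top .
\]

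\textbf{Step 1: Krylov structure of the adjugate.} Write $\adj(sI-A)=\sum_{p=0}^{n-1}s^pG_p$. Each $G_p$ is a polynomial in $A$, since the adjugate coefficients are determined by the characteristic polynomial and powers of $A$. Set $v(s):=[1,s,\ldots,s^{n-1}]^\top$. A direct check on the companion form \eqref{comp-matrix} gives $(sI-A)v(s)=a(s)\varepsilon_n$, hence $\adj(sI-A)\varepsilon_n=v(s)$. Comparing coefficients of $s^p$ yields
\[
G_p\varepsilon_n=\varepsilon_{p+1},\qquad p=0,\ldots,n-1 .
\]
Therefore
\[
Q_{kl}=G_{k-1}\varepsilon_n\varepsilon_n^\top G_{l-1}^\top+G_{l-1}\varepsilon_n\varepsilon_n^\top G_{k-1}^\top .
\]

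\textbf{Step 2: pulling the $G_p$ out of the integral.} Each $G_p$ commutes with $e^{At}$, and $G_p^\top$ commutes with $e^{A^\top t}$. Hence $\TA(G_pYG_q^\top)=G_p\TA(Y)G_q^\top$ for every $Y$, and in particular
\[
P_{kl}=G_{k-1}WG_{l-1}^\top+G_{l-1}WG_{k-1}^\top .
\]

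\textbf{Step 3: Gramian normalization.} Let $\tilde A:=W^{-1/2}AW^{1/2}$. Multiplying the Gramian equation by $W^{-1/2}$ on both sides gives
\[
\tilde A+\tilde A^\top=-W^{-1/2}\varepsilon_n\varepsilon_n^\top W^{-1/2}\preceq0,
\]
so $M:=-\tilde A$ is accretive. By similarity, $\adj(zI+M)=\adj(zI-\tilde A)=W^{-1/2}\adj(zI-A)W^{1/2}$. Thus the coefficients in \eqref{eq:F-polynomial} are $F_p=W^{-1/2}G_pW^{1/2}$, i.e. $G_pW^{1/2}=W^{1/2}F_p$. Using $W=W^{1/2}W^{1/2}$,
\[
G_pWG_q^\top=(G_pW^{1/2})(G_qW^{1/2})^\top=W^{1/2}F_pF_q^\top W^{1/2}.
\]
Consequently
\[
P_{kl}=W^{1/2}\bigl(F_{k-1}F_{l-1}^\top+F_{l-1}F_{k-1}^\top\bigr)W^{1/2}.
\]
Lemma~\ref{lem:pairwise-adjugate} says the middle factor is positive semidefinite, and congruence by $W^{1/2}$ preserves this, which gives \eqref{eq:P-positive}.

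The step I expect to need the most care is the bookkeeping in Step 1 and Step 3. In Step 1 the identity $G_p\varepsilon_n=\varepsilon_{p+1}$ depends on the superdiagonal companion convention; in Step 3 the sign convention $zI+M$ versus $zI-A$ must match. Both reduce to verifying $(sI-A)v(s)=a(s)\varepsilon_n$ and to the fact that similarity commutes with taking adjugates. The genuine difficulty, the pairwise positivity of the adjugate coefficients, is already handled by Lemma~\ref{lem:pairwise-adjugate}.
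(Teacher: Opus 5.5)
Your proposal is correct and is essentially the paper's proof. The paper also writes $P_{kl}=h_k(A)\,G\,h_l(A)^\top+h_l(A)\,G\,h_k(A)^\top$, where $G$ is the controllability Gramian of $(A,\varepsilon_n)$ and the Horner polynomials $h_j$ (the adjugate coefficients) satisfy $h_j(A)\varepsilon_n=\varepsilon_j$; it then normalizes through a factorization $G=LL^\top$ so that $M=-L^{-1}AL$ is accretive, and applies Lemma~\ref{lem:pairwise-adjugate} with $F_{j-1}=h_j(L^{-1}AL)$. The differences are cosmetic: you take $L=W^{1/2}$, and you verify $G_p\varepsilon_n=\varepsilon_{p+1}$ via $(sI-A)v(s)=a(s)\varepsilon_n$ rather than by backward induction on the Horner recursion.
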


\begin{proof}
Define the Horner polynomials associated with the characteristic polynomial $a(s)$ of $A$:
$$
\begin{aligned}
h_n(s) 
&:=1 \notag\\
h_j(s)&:=s^{n-j}+\sigma_1s^{n-j-1}+\cdots+\sigma_{n-j},
\qquad j=1,\ldots,n-1, 
\end{aligned}
$$
As a consequence of the companion structure 
of $A$, we have
\begin{equation}\label{eq:Horner-canonical}
h_j(A)\eps_n=\eps_j.
\end{equation}
For $j=n$ this is obvious; backward induction follows from
\[
h_{j-1}(s)=s h_j(s)+\sigma_{n-j+1},\qquad
A\eps_j+\sigma_{n-j+1}\eps_n=\eps_{j-1}
\quad(2\le j\le n).
\]
Since $A$ is Hurwitz and
  $(A,\eps_n)$ is controllable the controllability Gramian
\begin{equation}\label{eq:G}
G=\int_0^\infty e^{At}\eps_n\eps_n^\top e^{A^\top t}\,dt
\end{equation}
is positive definite.
Notice that $G$ satisfies
\begin{equation}\label{eq:G-lyapunov}
AG+GA^\top=-\eps_n\eps_n^\top.
\end{equation}

Factor $G=LL^\top$ with $L$ real and nonsingular, and set
\[
S=L^{-1}AL,\qquad c=L^{-1}\eps_n.
\]
Multiplying \eqref{eq:G-lyapunov} by $L^{-1}$ and $L^{-\top}$ gives
\begin{equation}\label{eq:normalized-energy}
S+S^\top=-cc^\top.
\end{equation}
Thus $M=-S$ is accretive ($M+M^\top\succeq 0$).

Since $S$ is obtained from $A$ by a change of basis, it has the same characteristic polynomial $a(s)$ of $A$.
The adjugate--Horner
identity \cite[page 51]{Marcus-Minc} gives
\begin{equation}\label{eq:adj-Horner}
\adj(zI+M)=\adj(zI-S)=\sum_{j=1}^n z^{j-1}h_j(S).
\end{equation}
This identity follows by multiplication by $zI-S$ and the
Cayley--Hamilton theorem. 
Applying Lemma~\ref{lem:pairwise-adjugate}
with $F_{j-1}=h_j(S)$ gives
\begin{equation}\label{eq:Horner-pairwise}
h_k(S)h_l(S)^\top+h_l(S)h_k(S)^\top\succeq0.
\end{equation}

Finally, \eqref{eq:Horner-canonical} and commutation of the matrix
polynomials $h_j(A)$ with $e^{At}$ give
\begin{align}
P_{kl}
&=h_k(A)G\,h_l(A)^\top+h_l(A)G\,h_k(A)^\top\notag\\
&=L\left[h_k(S)h_l(S)^\top+h_l(S)h_k(S)^\top\right]L^\top.
\label{eq:P-congruence}
\end{align}
The conclusion follows from \eqref{eq:Horner-pairwise}.
\end{proof}

As a corollary of the previous result, we have the following
Theorem which is our main result.

\begin{theorem}\label{main-result}
Let $A$  be a real Hurwitz  companion matrix given by 
\eqref{comp-matrix} and $Q,R\in\Sn$.
Then the solution $P$ of (\ref{eq:lyap-primal}) is positive semidefinite
for all $Q\geq 0$ and  the solution $X$ of (\ref{eq:lyap-adjoint}) is entrywise nonnegative for all $R\succeq0$.
\end{theorem}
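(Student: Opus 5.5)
The plan is to deduce the theorem from Proposition~\ref{prop:lyapunov-primal-sc} by linearity and then invoke Theorem~\ref{theo:duality}.

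\emph{Primal property.} Take any $Q=Q^\top\entrygeq0$. Decompose it in the elementary matrices \eqref{elementary-Q}:
\[
Q=\sum_{k=1}^n \tfrac12 q_{kk}\,Q_{kk}+\sum_{k<l} q_{kl}\,Q_{kl},
\]
using $Q_{kk}=2\eps_k\eps_k^\top$. All coefficients $\tfrac12 q_{kk}$ and $q_{kl}$ are nonnegative because $Q\entrygeq0$.

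\emph{Summing the solutions.} The solution map $\TA$ of \eqref{eq:TA} is linear. Hence
\[
P=\TA(Q)=\sum_k \tfrac12 q_{kk}P_{kk}+\sum_{k<l} q_{kl}P_{kl}.
\]
By Proposition~\ref{prop:lyapunov-primal-sc}, each $P_{kl}\succeq0$. The cone of positive semidefinite matrices is closed under nonnegative combinations, so $P\succeq0$. This establishes \eqref{eq:property-primal} for every $Q\entrygeq0$.

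\emph{Adjoint property.} Since $A$ is Hurwitz, Theorem~\ref{theo:duality} applies. It converts \eqref{eq:property-primal} into \eqref{eq:property-adjoint}, so $X=\TAs(R)\entrygeq0$ for every $R\succeq0$.

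There is no real obstacle at this stage: all the substantive work sits in Proposition~\ref{prop:lyapunov-primal-sc}, which in turn relies on Lemma~\ref{lem:pairwise-adjugate}. The only points needing care are the factor $2$ in the diagonal elementary matrices and the observation that nonnegativity of the coefficients comes exactly from $Q\entrygeq0$.
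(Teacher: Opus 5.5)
Your proposal is correct and follows the paper's proof: you write $Q\entrygeq0$ as a nonnegative combination of the elementary matrices, apply Proposition~\ref{prop:lyapunov-primal-sc} with linearity, and then invoke Theorem~\ref{theo:duality}. The only cosmetic difference is in the diagonal part. You cover it with the case $k=l$ of the proposition, where $P_{kk}=2h_k(A)Gh_k(A)^\top\succeq0$ trivially, whereas the paper collects it into a nonnegative diagonal $D\succeq0$ handled by standard Lyapunov theory.
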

\begin{proof}
By using Proposition \ref{prop:lyapunov-primal-sc}, Property (\ref{eq:property-primal}) follows immediately.
In fact, any elementwise nonnegative matrix $Q\in\Sn$ can be written as the sum of a linear combination with nonnegative coefficients of simple matrices of the form $Q_{kl}$
given by \eqref{elementary-Q} and a nonnegative diagonal matrix
$D$ which is clearly positive semidefinite so that the solution 
$P_D$ of (\ref{eq:lyap-primal}) associated with $Q=D$ is positive semidefinite in view of the standard Lyapunov theory.
By linearity, we then have (\ref{eq:property-primal}).
Theorem \ref{theo:duality} allows to conclude that (\ref{eq:property-adjoint}) also holds.
\end{proof}

\section{A larger class of matrices $A$}\label{sec:main-res-ext}

We have established that if $A$ is a Hurwitz  companion matrix
then properties (\ref{eq:property-primal}) and  (\ref{eq:property-adjoint})
hold.
It is natural to ask whether these properties hold for a larger class of matrices.
The answer is yes.
Indeed, there is  a simple invariance principle.  Suppose that $\tilde{A}$ satisfies
\eqref{eq:property-primal} and let $T$ be an entrywise nonnegative and nonsingular matrix and set
\[
A=T^{-1}\tilde{A} T.
\]
Notice that, the class of Hurwitz  companion matrices is not invariant under this
transformation so that, even if $\tilde{A}$ is a Hurwitz matrix in companion form,
$A$ need not be in companion form.

Then
\begin{equation}\label{eq:similarity}
  \TA(Q)
  =T^{-1}\TAt(TQT^\top)T^{-\top}.
\end{equation}
Since $Q\geq 0$ implies $TQT^\top\geq 0$, equation
\eqref{eq:similarity} shows that also $A$  is such that 
\eqref{eq:property-primal} is satisfied.
In conclusion, we have the following result.
\begin{theorem}\label{main-result-gen}
Let $A$  be similar to a real Hurwitz  companion matrix
via an entrywise non-negative change of basis,  and $Q,R\in\Sn$.
Then the solution $P$ of (\ref{eq:lyap-primal}) is positive semidefinite
for all $Q\geq 0$ and  the solution $X$ of (\ref{eq:lyap-adjoint}) is entrywise nonnegative for all $R\succeq0$.
\end{theorem}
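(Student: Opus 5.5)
The plan is to transport the result of Theorem \ref{main-result} along the similarity and then use duality. Write $A=T^{-1}\tilde A T$, where $\tilde A$ is a real Hurwitz companion matrix of the form \eqref{comp-matrix} and $T\entrygeq0$ is nonsingular. Since similarity preserves the spectrum, $A$ is Hurwitz. Hence the integral operators $\TA$ and $\TAs$ of Theorem \ref{theo:duality} are well defined and give the unique solutions of \eqref{eq:lyap-primal} and \eqref{eq:lyap-adjoint}.

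First I would verify identity \eqref{eq:similarity}. From $A=T^{-1}\tilde AT$ we get $e^{At}=T^{-1}e^{\tilde At}T$ and $e^{A^\top t}=T^\top e^{\tilde A^\top t}T^{-\top}$. Substituting these into \eqref{eq:TA} gives
\[
\TA(Q)=T^{-1}\left(\int_0^\infty e^{\tilde At}\,TQT^\top\,e^{\tilde A^\top t}\,dt\right)T^{-\top}=T^{-1}\TAt(TQT^\top)T^{-\top}.
\]
Alternatively, one can check directly that the right-hand side solves \eqref{eq:lyap-primal}: multiply $\tilde A\tilde P+\tilde P\tilde A^\top=-TQT^\top$ on the left by $T^{-1}$ and on the right by $T^{-\top}$, then invoke uniqueness.

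Next, take any $Q\in\Sn$ with $Q\entrygeq0$. Each entry of $TQT^\top$ is $\sum_{a,b}T_{ia}Q_{ab}T_{jb}$, a sum of products of nonnegative numbers, so $TQT^\top$ is symmetric and entrywise nonnegative. Theorem \ref{main-result} applied to $\tilde A$ gives $\TAt(TQT^\top)\succeq0$. A congruence by the real matrix $T^{-1}$ preserves positive semidefiniteness, so $\TA(Q)\succeq0$, which is property \eqref{eq:property-primal} for $A$. Finally, Theorem \ref{theo:duality} applies because $A$ is Hurwitz, and it converts \eqref{eq:property-primal} into \eqref{eq:property-adjoint}: $X\entrygeq0$ for every $R\succeq0$.

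No step is a real obstacle. The only points needing care are that $T^{-1}$ is generally not entrywise nonnegative and that this does not matter. Nonnegativity of $T$ is used only to keep $TQT^\top$ in the nonnegative cone. The outer factor $T^{-1}$ acts by congruence, and congruence by any invertible matrix preserves the PSD cone. For this reason the argument must go through the primal property and not directly through the adjoint one.
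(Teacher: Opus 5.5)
Your proof is correct and is essentially the paper's argument: identity \eqref{eq:similarity}, entrywise nonnegativity of $TQT^\top$, congruence by $T^{-1}$, and then Theorem~\ref{theo:duality} applied to the Hurwitz matrix $A$. One closing remark is inaccurate, though it does not affect the proof: the adjoint property also transfers directly, since $\TAs(R)=T^\top\mathcal{T}_{\tilde A}^{*}(T^{-\top}RT^{-1})\,T$, where the congruence by $T^{-1}$ acts on the positive semidefinite input and the nonnegative $T$ acts on the output, so going through the primal property is not forced.
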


Unfortunately, while it is easy to see by direct computation that, in the case of $n=2$, this class
characterizes all the state matrices $A$ for which (\ref{eq:property-primal}) and  (\ref{eq:property-adjoint}) hold, for larger values of $n$, this is not the case.

Indeed consider the matrix
\begin{equation}\label{eq:A33}
  A:=
  \begin{bmatrix}
    2&4&-1\\
    -1&0&3\\
    -4&-8&-6
  \end{bmatrix}.
\end{equation}
We first show that it is such that (\ref{eq:property-primal}) and  (\ref{eq:property-adjoint}) are satisfied. 
For $k,l=1,2,3$, and $k<l$, consider the three matrices
$Q_{kl}$ as defined in \eqref{elementary-Q}. 
Let $P_{kl}$ be the associated solutions of
\eqref{eq:lyap-primal} with $Q=Q_{kl}$.
A direct  calculation gives
\[
 {\tiny P_{12}
  =\frac18
  \begin{bmatrix}
    44&-24&-8\\
    -24&25&-8\\
    -8&-8&16
  \end{bmatrix},\quad  P_{13}
  =\frac1{64}
  \begin{bmatrix}
    72&-36&0\\
    -36&33&-12\\
    0&-12&16
  \end{bmatrix},
\quad 
  P_{23}
  =\frac1{16}
  \begin{bmatrix}
    44&-24&-8\\
    -24&27&-8\\
    -8&-8&16
  \end{bmatrix}}
\]
which are readily seen to be positive definite.
Thus $A$ in \eqref{eq:A33} is such that properties (\ref{eq:property-primal}) and  (\ref{eq:property-adjoint}) are satisfied.

Nevertheless, $A$ is a Hurwitz matrix but  cannot be represented as
\[
  A=T^{-1}C T,
\]
where $C$ is a Hurwitz companion matrix and $T$ is invertible and
entrywise nonnegative.
In fact, the companion matrix having the same characteristic polynomial as $A$ is
\[
  C=
  \begin{bmatrix}
    0&1&0\\
    0&0&1\\
    -32&-12&-4
  \end{bmatrix}.
\]
Suppose, by contradiction, that
  $A=T^{-1}CT$ for some invertible entrywise nonnegative matrix $T$.  Equivalently,
there exists a nonsingular and nonnegative matrix $T$ such that $CT-TA=0.$
We can rewrite  this Sylvester equation
by the using the Kronecker product formula:
$$
[I\otimes C-A^\top\otimes I]{\rm vec}(T)=0,
$$
where, as already mentioned, ${\rm vec}(\cdot)$ is the linear operator mapping a matrix in a
vector obtained by stacking the columns of the matrix one on top of the other.
We can explicitly compute $[I\otimes C-A^\top\otimes I]$
and its kernel. This provides the following parametrization of the matrices $T$
for which $CT-TA=0$:
$$
{\rm vec}(T)=
\frac1{31}
\begin{bmatrix}
54&17&3\\
-96&18&5\\
-160&-156&-2\\
80&16&1\\
-32&68&12\\
-384&-176&20\\
31&0&0\\
0&31&0\\
0&0&31
\end{bmatrix}w, \qquad w\in\R^3.
$$
By considering the last $3$ rows, it is clear that 
to ensure that all the elements of $T$ are nonnegative
we must have $w\geq 0$.
On the other hand, from the third row it is clear that the only $w\geq 0$
for which all the elements of $T$ are nonnegative is $w=0$.
Hence the only $T\geq 0$ solving $CT-TA=0$ is $T=0$ which is clearly singular.

In conclusion,
\[
  A\neq T^{-1}CT
\]
for every Hurwitz companion matrix $C$ and every invertible entrywise
nonnegative matrix $T$.
Hence,  the  two-dimensional nonnegative-similarity
characterization does not extend to higher dimensions.

\subsection{Open neighbourhood of matrices $A$}
Consider the $2\times2$ real matrix
\begin{equation}\label{eq:concreteA}
  A:=
  \begin{bmatrix}
    \tfrac14&1\\
    -1&-\tfrac94
  \end{bmatrix}.
\end{equation}
To show that (\ref{eq:property-primal}) and  (\ref{eq:property-adjoint})
hold for this $A$ it is sufficient to 
observe that $\TA(Q_{12})\succeq 0,$ with
$Q_{12}=\begin{bmatrix}0&1\\1&0\end{bmatrix}$.
Indeed, we have 
\[
  \TA(Q_{12})
  =\tfrac{1}{14}
  \begin{bmatrix}
    36 &-9\\[1mm]
    -9&4
  \end{bmatrix}\succ 0.
\]
Moreover, since both pairs \((A,\varepsilon_1)\) and \((A,\varepsilon_2)\) are controllable,  we have $\TA(\varepsilon_1 \varepsilon_1^\top)\succ 0$
and $\TA(\varepsilon_2 \varepsilon_2^\top)\succ 0$ so that
the matrix $A$ in \eqref{eq:concreteA} satisfies the slightly stronger property
\begin{equation}\label{eq:strongp-prop}
  Q=Q^\top \geq0,\quad Q\ne0
  \quad\Longrightarrow\quad
  \TA(Q)\succ0.
\end{equation}
This stronger property is extremely interesting because
when it holds, it remains valid for each state matrix $A_\varepsilon$ in an open
neighborhood of the original matrix $A$.
To see this, consider the {\em finite} set of matrices 
$\mathcal G:=\{Q_{ii}, i=1,\dots, n\}\cup\{Q_{ij}; i,j=1,\dots, n, i<j\}$.
Let $Q\in\mathcal G$. If  $\mathcal T_A(Q)\succ 0$ then
there exists an open neighborhood $\mathcal{A}(Q)$ of $A$  such that 
for all $\hat{A}\in\mathcal{A}(Q)$,   $T_{\hat{A}}(Q)\succ 0$ still holds because the solution of a Lyapunov equation depends continuously on the Hurwitz
state matrix $A$. 
Since $\mathcal G$ is a finite set,  the intersection
$$
\mathcal{A}:=
\cap_{Q\in\mathcal G}
\mathcal{A}(Q)$$ 
is still an open neighbourhood of $A$  and if $A_\varepsilon\in \mathcal{A}$
we have
$$
Q\in\mathcal G\Longrightarrow \mathcal T_A(Q)\succ 0$$
which readily implies property \eqref{eq:strongp-prop}.
Consequently, all sufficiently
small perturbations of $A$ preserve Hurwitz stability and property \eqref{eq:strongp-prop}
and hence also properties (\ref{eq:property-primal}) and  (\ref{eq:property-adjoint}). 

\section{Control and filtering interpretations}
\label{sec:interpreta}

\subsection{Covariance normalization and dissipative dynamics}

The Gramian $G$ in \eqref{eq:G} is the stationary covariance of the
linear stochastic system
\[
dX_t=AX_t\,dt+\eps_n\,dW_t,
\]
where $W_t$ is a scalar standard Wiener process. In the coordinates
$\xi_t=L^{-1}X_t$, the stationary covariance is $I$, and the system is
\[
d\xi_t=S\xi_t\,dt+c\,dW_t.
\]
Its covariance balance is precisely
\eqref{eq:normalized-energy}. The associated unforced dynamics
$\dot\xi=S\xi$ satisfy
\[
\frac{d}{dt}\|\xi(t)\|^2
=\xi(t)^\top(S+S^\top)\xi(t)
=-|c^\top\xi(t)|^2.
\]
Thus Gramian normalization produces the dissipative realization needed
for Lemma~\ref{lem:pairwise-adjugate}.

\subsection{The auxiliary determinant as a damped system}

On setting $s=t=z$ in \eqref{eq:D-tau}, one obtains
\[
D_\tau(z,z)
=\det\!\left(z^2I+z(M+M^\top)+M^\top M+\tau xx^\top\right).
\]
This is the characteristic determinant of the damped second-order system
\[
\ddot u+(M+M^\top)\dot u+(M^\top M+\tau xx^\top)u=0.
\]
Its nonnegative energy
\[
\mathcal E(t)=\frac12\|\dot u(t)\|^2
+\frac12u(t)^\top(M^\top M+\tau xx^\top)u(t)
\]
satisfies
\[
\frac{d}{dt}\mathcal E(t)
=-\dot u(t)^\top(M+M^\top)\dot u(t)\le0.
\]
The stability argument in two independent variables strengthens this
energy reasoning enough to separate every pair of adjugate coefficients.

\subsection{The Gramian for the sum of two input directions}

For the proposed positive semidefinite forcing
\[
Q_{kkll}=(\eps_k+\eps_l)(\eps_k+\eps_l)^\top,
\]
define
\[
P_{kkll}=\int_0^\infty e^{At}Q_{kkll}e^{A^\top t}\,dt.
\]
Because $Q_{kk}=2\eps_k\eps_k^\top$, linearity gives
\[
P_{kkll}=\frac12(P_{kk}+P_{ll})+P_{kl}.
\]
Theorem~\ref{main-result} establishes the stronger comparison
\begin{equation}\label{eq:sum-Gramian}
 P_{kkll}\succeq\frac12(P_{kk}+P_{ll}). 
\end{equation}
Equivalently, for every scalar output direction $y\in\R^n$, the two
input-channel impulse responses have nonnegative $L^2$ inner product:
\[
\int_0^\infty
\bigl(y^\top e^{At}\eps_k\bigr)
\bigl(y^\top e^{At}\eps_l\bigr)\,dt
=\frac12y^\top P_{kl}y\ge0.
\]

\section{Conclusions}\label{sec:conclusions}
We have established a duality between entrywise nonnegativity and positive semidefiniteness for Lyapunov solution operators and proved that both positivity properties hold for every real Hurwitz companion matrix. This resolves the companion-matrix conjecture without restrictions on the reality or multiplicity of the eigenvalues. The proof combines controllability-Gramian normalization with a pairwise positivity theorem for adjugate coefficients of real accretive matrices, obtained through bivariate polynomial stability. Covariance and energy interpretations further connect these results with control and filtering. 
We have considerably extended the class of state matrices $A$ for which 
the Lyapunov solution operators satisfy these equivalent positivity properties
and we have seen that this class is quite vast.
However, we have shown that even this extension does not provide the full class
ensuring the positivity properties.
A natural direction for future research is to characterize this full class of  state matrices.

\end{document}